\def\qed{\hfill $\Box$}
\newcommand{\argmin}{\operatornamewithlimits{argmin}}
\newtheorem{thm}{\bfseries Theorem}
\newtheorem{lem}[thm]{\bfseries Lemma} 
\newtheorem{remark}[thm]{\bfseries Remark} 
\begin{document}
\title{Discrete Convexity in Joint Winner Property}
\author{Yuni Iwamasa\footnotemark[1] \and Kazuo Murota\footnotemark[2] \and Stanislav \v{Z}ivn\'y\footnotemark[3]}
\date{\today}
\footnotetext[1]{Department of Mathematical Informatics,
Graduate School of Information Science and Technology,
University of Tokyo, Tokyo, 113-8656, Japan.\\
Email: \texttt{yuni\_iwamasa@mist.i.u-tokyo.ac.jp}}
\footnotetext[2]{Department of Business Administration,
	Tokyo Metropolitan University, Tokyo, 192-0397, Japan.\\
	Email: \texttt{murota@tmu.ac.jp}}
\footnotetext[3]{Department of Computer Science,
	University of Oxford, Oxford, OX1 3QD, United Kingdom.\\
	Email: \texttt{standa.zivny@cs.ox.ac.uk}}
\maketitle

\begin{abstract}
	In this paper, we reveal a relation between joint winner property (JWP) in the field of valued constraint satisfaction problems (VCSPs) and M${}^\natural$-convexity in the field of discrete convex analysis (DCA).
	We introduce the M${}^\natural$-convex completion problem,
	and show that a function $f$ satisfying the JWP is Z-free if and only if
	a certain function $\overline{f}$ associated with $f$ is M${}^\natural$-convex completable.
	This means that if a function is Z-free,
	then the function can be minimized in polynomial time via M${}^\natural$-convex intersection algorithms.
	Furthermore we propose a new algorithm for Z-free function minimization,
	which is faster than previous algorithms for some parameter values.
\end{abstract}
\begin{quote}
	{\bf Keywords: }
	valued constraint satisfaction problems, discrete convex analysis, M-convexity
\end{quote}

\section{Introduction}
A {\it valued constraint satisfaction problem (VCSP)} is a general framework for discrete optimization (see \cite{book/Zivny12} for details).
Informally, the VCSP framework deals with the minimization problem of a function represented as the sum of ``small'' arity functions.
It is known that various kinds of combinatorial optimization problems can be formulated in the VCSP framework.
In general, the VCSP is NP-hard.
An important line of research is to investigate which classes of instances are solvable in polynomial time,
and why these classes ensure polynomial time solvability.
Cooper--\v{Z}ivn\'{y}~\cite{AI/CZ11} showed that if a function represented as the sum of unary or binary functions satisfies the {\it joint winner property (JWP)},
then the function can be minimized in polynomial time.
This gives an example of a class of instances that are solvable in polynomial time.

In this paper,
we present the reason why JWP ensures polynomial time solvability via {\it discrete convex analysis (DCA)}~\cite{book/Murota03}, particularly, {\it M${}^\natural$-convexity}~\cite{MOR/MS99}.
DCA is a theory of convex functions on discrete structures,
and M${}^\natural$-convexity is one of the important convexity concepts in DCA.
M${}^\natural$-convexity appears in many areas such as operations research, economics, and game theory (see e.g.,~\cite{book/Murota03, incollection/M09,  JMID/M16}).

The results of this paper are summarized as follows:
\begin{itemize}
	\item We reveal a relation between JWP and M${}^\natural$-convexity.
	That is, we give a DCA interpretation of polynomial-time solvability of JWP.
	\item To describe the connection of JWP and M${}^\natural$-convexity,
	we introduce the {\it M${}^\natural$-convex completion problem},
	and give a characterization of M${}^\natural$-convex completability.
	\item By utilizing a DCA interpretation of JWP,
	we propose a new algorithm for Z-free function minimization,
	which is faster than previous algorithms for some parameter values.
\end{itemize}
This study will hopefully be the first step towards fruitful interactions between VCSPs and DCA.

\paragraph{Notations.}
Let $\mathbf{R}$ and $\mathbf{R}_+$ denote the sets of reals and nonnegative reals, respectively.
In this paper, functions can take the infinite value $+ \infty$, where $a < + \infty$, $a + \infty = + \infty$ for $a \in \mathbf{R}$, and $0 \cdot (+\infty) = 0$.
Let $\overline{\mathbf{R}} := \mathbf{R} \cup \{+ \infty\}$ and $\overline{\mathbf{R}}_+ := \mathbf{R}_+ \cup \{+ \infty\}$.
For a function $f : \{0,1\}^n \rightarrow \overline{\mathbf{R}}$, the effective domain is denoted as ${\rm dom}\ f := \{ x \in \{0,1\}^n \mid f(x) < + \infty \}$.
For a positive integer $k$, we define $[k] := \{1,2,\dots,k\}$.
For $x = (x_1,x_2,\dots,x_n) \in \mathbf{R}^n$, we define $\textrm{supp}^+(x) := \{ i \in [n] \mid x_i > 0 \}$.

\section{Preliminaries}
\paragraph{Joint Winner Property.}
Let $d_i \geq 2$ be a positive integer and $D_i := [d_i]$ for $i \in [r]$.
We consider a function $f : D_1 \times D_2 \times \cdots \times D_r \rightarrow \overline{\mathbf{R}}_+$ represented as the sum of unary or binary functions as
\begin{align}\label{eq:binary VCSP}
f(x_1,x_2,\dots,x_r) = \sum_{i \in [r]} c_i(x_i) + \sum_{1\leq i < j \leq r}c_{ij}(x_i, x_j),
\end{align}
where $c_i : D_i \rightarrow \mathbf{R}_+$ is a unary function for $i \in [r]$
and $c_{ij} : D_i \times D_j \rightarrow \overline{\mathbf{R}}_+$ is a binary function for $1 \leq i < j \leq r$.
Furthermore we assume $c_{ij} = c_{ji}$ for distinct $i,j \in [r]$.
A function $f$ of the form (\ref{eq:binary VCSP}) is said~\cite{AI/CZ11} to satisfy the {\it joint winner property (JWP)} if it holds that
\begin{align}\label{eq:JWP}
c_{ij}(a,b) \geq \min\{c_{jk}(b,c), c_{ik}(a, c)\}
\end{align}
for all distinct $i,j,k \in [r]$ and all $a \in D_i, b \in D_j, c \in D_k$.
A function $f$ of the form (\ref{eq:binary VCSP}) satisfying the JWP is said to be {\it Z-free}
if it satisfies that
\begin{align}\label{eq:Z-free}
	|\argmin \{ c_{ij}(a,c), c_{ij}(a,d),c_{ij}(b,c), c_{ij}(b,d) \}| \geq 2
\end{align}
for any $i,j \in [r]$ ($i \neq j$), $\{a,b\} \subseteq D_i$ ($a \neq b$),
and $\{c,d\} \subseteq D_j$ ($c \neq d$).

Cooper--\v{Z}ivn\'{y}~\cite{AI/CZ11} showed that if $f$ of the form (\ref{eq:binary VCSP}) satisfies the JWP, then $f$ can be minimized in polynomial time.
In fact,
they showed that if $f$ satisfies the JWP,
then $f$ can be transformed into a certain Z-free function $f'$ in polynomial time such that a minimizer of $f'$ is also a minimizer of $f$.
Moreover they showed that a Z-free function can be minimized in polynomial time.

JWP appears in many contexts.
For example, JWP identifies a tractable class of the MAX-2SAT problem,
which is a well-known NP-hard problem~\cite{TCS/GJS76}.
Indeed, for a 2-CNF formula $\psi$,
we can represent the MAX-2SAT problem for $\psi$ as a Boolean binary $\{0,1\}$-valued VCSP instance.
In this binary VCSP instance,
JWP is equivalent to the following condition on $\psi$:
if clauses $(x_1 \vee x_2)$ and $(x_1 \vee x_3)$ are contained in $\psi$,
then so is $(x_2 \vee x_3 )$.
In addition, the {\sc AllDifferent} constraint~\cite{AAAI/R94} and {\sc SoftAllDiff} constraint~\cite{CP/PRB01}
can be regarded as special cases of the JWP,
and certain scheduling problems introduced in~\cite{CACM/BCS74, OR/H73} satisfy the JWP.
See also Examples~5--8 in \cite{AI/CZ11} for details.

\paragraph{M${}^\natural$-Convexity.}
A function $f : \{0,1\}^n \rightarrow \overline{\mathbf{R}}$ is said~\cite{book/Murota03,incollection/M09} to be {\it M${}^\natural$-convex} if for all $x,y \in \{0,1\}^n$ and all $i \in \textrm{supp}^+(x-y)$ there exists $j \in {\rm supp}^+(y-x) \cup \{0\}$ such that
\begin{align}\label{eq:M natural def}
f(x) + f(y) \geq f(x - \chi_i + \chi_j) + f(y + \chi_i - \chi_j),
\end{align}
where $\chi_i$ is the $i$th unit vector and $\chi_0$ is the zero vector.
A function $f : \{0,1\}^n \rightarrow \overline{\mathbf{R}}$ is said~\cite{book/Murota03} to be {\it M${}^\natural_2$-convex} if $f$ can be represented as the sum of two M${}^\natural$-convex functions.
It is well known that M${}^\natural$-convex functions can be minimized in polynomial time.
Furthermore if we are given two M${}^\natural$-convex functions $g$ and $h$,
we can minimize an M${}^\natural_2$-convex function $f = g+h$ in polynomial time by solving the so-called ``M${}^\natural$-convex intersection problem.''

\begin{thm}[{\cite[Theorem 10]{ET/RGP02}\cite[Theorem 6.5]{JORSJ/ST15}; see also~\cite[Theorem~3.3]{JMID/M16}}]\label{thm:{0,1}M}
	A function $f : \{0,1\}^n \rightarrow \overline{\mathbf{R}}$ with the zero vector in ${\rm dom}\ f$ is M${}^{\natural}$-convex if and only if $f$ satisfies the following two conditions:
	\begin{description}
		\item[Condition 1:] For all distinct $i,j,k \in [n]$ and all $ z \in \{0, 1\}^n$ with ${\rm supp}^+(z) \subseteq [n] \setminus \{i,j,k\}$, it holds that
		\begin{align}
		f(z + \chi_i + \chi_j) + f(z + \chi_k) \geq \min\{f(z + \chi_j + \chi_k) + f(z + \chi_i), f(z + \chi_i + \chi_k) + f(z + \chi_j)\}.
		\end{align}
		\item[Condition 2:] For all distinct $i,j \in [n]$ and all $z \in \{0, 1\}^n$ with ${\rm supp}^+(z) \subseteq [n] \setminus \{i,j\}$, it holds that
		\begin{align*}
		f(z + \chi_i + \chi_j) + f(z) \geq f(z + \chi_i) + f(z + \chi_j).
		\end{align*}
	\end{description}
\end{thm}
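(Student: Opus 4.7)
The plan is to prove the two directions separately. For the (only if) direction, both Conditions 1 and 2 would follow by directly specializing the M${}^\natural$-exchange axiom (\ref{eq:M natural def}). Applying it to $x := z + \chi_i + \chi_j$ and $y := z$ at the index $i \in \mathrm{supp}^+(x - y)$ forces the exchange partner to be $0$, since $\mathrm{supp}^+(y - x) = \emptyset$, and the resulting inequality is precisely the one in Condition 2. Similarly, applying (\ref{eq:M natural def}) to $x := z + \chi_i + \chi_j$ and $y := z + \chi_k$ at index $i$ yields an exchange partner in $\{k, 0\}$; the two possible choices produce exactly the two terms over which the minimum is taken in Condition 1.

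For the (if) direction, I would reduce the full exchange axiom to the two local conditions by induction on the Hamming distance. Set $z := x \wedge y$ (componentwise minimum), $A := \mathrm{supp}^+(x - y)$, and $B := \mathrm{supp}^+(y - x)$, so that $x = z + \chi_A$ and $y = z + \chi_B$ with $A \cap B = \emptyset$; the task is, given $i \in A$, to produce $j \in B \cup \{0\}$ realizing the exchange. The cases $|A| + |B| \leq 3$ are exactly covered by Condition 2 (when $|A| = 2,\ |B| = 0$) and Condition 1 (when $|A| = 2,\ |B| = 1$); the remaining small cases are trivial since either both sides coincide or the exchange partner $j = i$ works. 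In the inductive step, assuming $|A| + |B| \geq 4$, I would pick an auxiliary index $k \in (A \cup B) \setminus \{i\}$, apply the inductive hypothesis to a pair of points at smaller Hamming distance obtained by shifting $k$ from one side to the other, and combine the resulting shorter exchanges with a local application of Condition 1 or 2 on the four-point configuration so produced.

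The main obstacle will be controlling the exchange partner returned by the inductive hypothesis: it need not coincide with the index $j \in B \cup \{0\}$ required for the original pair $(x, y)$, so one has to iterate the combination of intermediate inequalities and argue that the process terminates with a right-hand side that is a valid $(i, j)$-exchange for $(x, y)$. A secondary technicality is that $f$ may take the value $+\infty$, and those cases have to be treated separately to ensure that a valid witness $j$ is produced even when some intermediate points lie outside $\mathrm{dom}\, f$ (the inequality is vacuous when the right-hand side is $+\infty$, but a witness $j$ still has to be named when the left-hand side is finite). This local-to-global reduction is the technical core of the theorem, and I would follow the schemes carried out in~\cite{ET/RGP02, JORSJ/ST15}, adapted to the extended-real-valued setting as in~\cite{JMID/M16}.
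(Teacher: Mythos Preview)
The paper does not prove Theorem~\ref{thm:{0,1}M}; it is quoted from the literature (Reijnierse--van Gellekom--Potters~\cite{ET/RGP02}, Shioura--Tamura~\cite{JORSJ/ST15}, Murota~\cite{JMID/M16}) and used as a black box in the proof of Lemma~\ref{lem:Mnatural {0,1}}. There is therefore no ``paper's own proof'' to compare against.

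Your sketch is consistent with the arguments in those references: the (only if) direction is indeed just a specialization of the exchange axiom~(\ref{eq:M natural def}) to the pairs $(z+\chi_i+\chi_j,\,z)$ and $(z+\chi_i+\chi_j,\,z+\chi_k)$, and the (if) direction is a local-to-global induction on $|A|+|B|$ in which Conditions~1 and~2 serve as the base cases. The two technical points you flag---controlling the exchange partner produced by the inductive hypothesis, and handling $+\infty$ values so that a valid witness $j$ is still named when the left-hand side is finite---are exactly the places where care is needed, and the cited sources handle them. Since the paper itself treats the theorem as known, your plan to follow~\cite{ET/RGP02,JORSJ/ST15,JMID/M16} is the appropriate one.
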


We pay special attention to quadratic M${}^\natural$-convex functions.
Using Theorem~\ref{thm:{0,1}M}, we provide a necessary and sufficient condition for the M${}^\natural$-convexity of a function $f : \{0,1\}^n \rightarrow \overline{\mathbf{R}}$ of the form
\begin{align}\label{eq:f}
f(x_1,x_2, \dots, x_n) := \sum_{i \in [n]} h_i x_i + \sum_{1 \leq i < j \leq n} h_{ij} x_i x_j \qquad \left((x_1,x_2,\dots,x_n) \in \{0,1\}^n \right),
\end{align}
where we assume $h_{ij} = h_{ji}$ and $h_i < + \infty$ for $i,j \in [n]$.
\begin{lem}\label{lem:Mnatural {0,1}}
	A function $f$ of the form {\rm (\ref{eq:f})} is M${}^{\natural}$-convex if and only if it satisfies the following:
	\begin{itemize}
		\item $h_{ij} \geq \min \{h_{ik}, h_{jk}\} \qquad (i,j,k \text{ : distinct})$.
		\item $h_{ij} \geq 0 \qquad (i,j \text{ : distinct})$.
	\end{itemize}
\end{lem}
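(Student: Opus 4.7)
The plan is to apply Theorem~\ref{thm:{0,1}M} directly. First, observe that $f(0) = 0$ since all monomials in the form (\ref{eq:f}) vanish at the zero vector, so the zero vector belongs to $\mathrm{dom}\, f$ and the theorem is applicable. It therefore suffices to show that Conditions~1 and 2 of Theorem~\ref{thm:{0,1}M}, specialized to $f$ of the form (\ref{eq:f}), are equivalent to the two bulleted inequalities in the statement.

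For Condition~2, I would fix distinct $i, j$ and $z$ with $S := \mathrm{supp}^+(z) \subseteq [n] \setminus \{i,j\}$, and expand each of the four values $f(z)$, $f(z+\chi_i)$, $f(z+\chi_j)$, $f(z+\chi_i+\chi_j)$ using (\ref{eq:f}). The key observation is that all terms coming from the $S$-part of $z$, together with the cross terms $\sum_{l\in S}h_{li}$ and $\sum_{l\in S}h_{lj}$, appear with the same multiplicity on both sides. After this cancellation, the inequality collapses to $h_{ij} \geq 0$, which is exactly the second bulleted condition (and does not depend on $z$).

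For Condition~1, I would analogously fix distinct $i, j, k$ and $z$ with $\mathrm{supp}^+(z) \subseteq [n]\setminus\{i,j,k\}$, and expand the six quantities $f(z+\chi_p+\chi_q)$ and $f(z+\chi_p)$ for $p,q \in \{i,j,k\}$. Again the terms involving $z$ and $h_i+h_j+h_k$ together with $\sum_{l\in S}(h_{li}+h_{lj}+h_{lk})$ occur identically in $f(z+\chi_i+\chi_j)+f(z+\chi_k)$ and in each of $f(z+\chi_j+\chi_k)+f(z+\chi_i)$, $f(z+\chi_i+\chi_k)+f(z+\chi_j)$, so everything cancels except one quadratic coefficient on each side. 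The inequality reduces to $h_{ij} \geq \min\{h_{jk}, h_{ik}\}$, which is the first bulleted condition (and, crucially, is again independent of $z$, so the quantifier over $z$ in Condition~1 disappears).

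There is no genuine obstacle here beyond bookkeeping: the lemma follows once we notice that in a quadratic function all dependence on the \emph{base point} $z$ cancels in the second-difference expressions appearing in Conditions~1 and 2, so that these local inequalities are equivalent to their instantiations at $z = 0$. The only point worth stating explicitly is the equivalence in both directions: the ``only if'' direction is obtained by specializing Conditions~1 and 2 to $z = 0$, and the ``if'' direction follows because the cancellation argument above shows that the two bulleted conditions imply Conditions~1 and 2 for \emph{every} admissible $z$.
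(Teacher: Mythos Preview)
Your approach is the same as the paper's: apply Theorem~\ref{thm:{0,1}M}, expand the quadratic form, and observe that all $z$-dependent terms cancel so that Conditions~1 and~2 reduce to the two bulleted inequalities; the paper likewise proves the only-if direction by specializing to $z=0$ (phrased contrapositively) and the if direction by the same expansion for general $z$.

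The one detail you should make explicit, which the paper singles out because allowing $h_{ij}=+\infty$ is precisely the stated novelty of this lemma over earlier characterizations, is the infinite case in the if part. Your phrase ``all terms \dots\ appear with the same multiplicity on both sides'' tacitly subtracts common terms, which is illegitimate when those terms are $+\infty$. The paper handles this by first disposing of the case $f(z+\chi_i+\chi_j)=+\infty$ or $f(z+\chi_k)=+\infty$ (where Condition~1 holds trivially) and only then performing the cancellation, noting that the finiteness of the left-hand side forces every cancelled summand to be finite. Add this one-line case split and your argument coincides with the paper's.
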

In Lemma~\ref{lem:Mnatural {0,1}}, $h_{ij}$ can take the infinite value $+\infty$,
whereas all $h_{ij}$'s are assumed to be finite in the characterization in~\cite{JJIAM/HM04} and~\cite{incollection/M09}.
In particular, we refer to the first condition $h_{ij} \geq \min \{h_{ik}, h_{jk}\}$ ($i,j,k$ : distinct) as the {\it anti-ultrametric property}.
Note that no conditions are imposed on $h_i$.
The proof of Lemma~\ref{lem:Mnatural {0,1}} is in Section~\ref{sec:proofs}

By Lemma~\ref{lem:Mnatural {0,1}},
we know that M${}^\natural$-convexity of a function of the form (\ref{eq:f}) depends only on quadratic coefficients $(h_{ij})_{i,j \in [n]}$.
We say that a function $f$ of the form (\ref{eq:f}) is defined by $(h_{ij})_{i,j \in [n]}$
if the quadratic coefficients of $f$ is equal to $(h_{ij})_{i,j \in [n]}$.

\section{M${}^\natural$-Convexity in Joint Winner Property}
\paragraph{M${}^\natural$-Convex Completion Problem.}
We introduce the {\it M${}^\natural$-convex completion problem},
and give a characterization of an M${}^\natural$-convex completable function on $\{0,1\}^n$ defined by $(h_{ij})_{i,j \in [n]}$.
The M${}^\natural$-convex completion problem is the following:
\begin{description}
	\setlength{\itemsep}{-2pt}
	\item[Given:] $(h_{ij})_{i,j \in [n]}$ such that $h_{ij} \in \overline{\mathbf{R}}$ or $h_{ij}$ is undefined for every distinct $i,j \in [n]$.
	\item[Question:] By assigning appropriate values in $\overline{\mathbf{R}}$ to ``undefined" elements of $(h_{ij})_{i,j \in [n]}$,
	can we construct an M${}^\natural$-convex function $f : \{0,1\}^n \rightarrow \overline{\mathbf{R}}$ of the form (\ref{eq:f})?
\end{description}
It should be clear that a defined element can be equal to $+\infty$
and the infinite value ($+\infty$) may be assigned to undefined elements.
If there is an appropriate assignment of $(h_{ij})_{i,j \in [n]}$,
then $(h_{ij})_{i,j \in [n]}$ is said to be {\it M${}^\natural$-convex completable}.
If $h_{ij} < 0$ or $h_{ij} < \min\{h_{jk}, h_{ik} \}$ holds for some defined elements $h_{ij}, h_{jk}, h_{ik}$,
then we obviously know that $(h_{ij})_{i,j \in [n]}$ is not M${}^\natural$-convex completable.
Hence in considering the M${}^\natural$-convex completion problem,
we assume that
\begin{align}
&h_{ij} \geq 0,\label{eq:assumption h_ij}\\
&h_{ij} \geq \min\{h_{jk}, h_{ik} \}\label{eq:assumption h_ij h_jk h_ik}
\end{align}
for all defined elements $h_{ij}, h_{jk}, h_{ik}$.

For quadratic coefficients $H := (h_{ij})_{i,j \in [n]}$ containing undefined elements,
we define the {\it assignment graph} of $H$ as a graph $G_H = ([n], E_H; w)$,
where $E_H := \{ \{i,j\} \mid \text{$i \neq j$ and $h_{ij}$ is defined} \}$
and $w : E_H \rightarrow \overline{\mathbf{R}}_+$ is defined by $w(\{i,j\}) := h_{ij}$ for $\{i,j\} \in E_H$.
Then the following theorem holds.
\begin{thm}\label{thm:Mnatural conv completable}
	$H := (h_{ij})_{i,j \in [n]}$ is M${}^\natural$-convex completable if and only if
	$|\argmin_{e \in C} w(e) | \geq 2$ holds for every chordless cycle $C$ of $G_H$.
\end{thm}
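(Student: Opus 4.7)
The plan is to exploit the fact that the anti-ultrametric property $h_{ij} \geq \min\{h_{ik},h_{jk}\}$ on a triangle is equivalent to the statement ``the minimum edge weight of the triangle is attained at least twice,'' and that this local condition extends from triangles to arbitrary cycles. Using Lemma~\ref{lem:Mnatural {0,1}}, M${}^\natural$-convex completability of $H$ is equivalent to the existence of a nonnegative assignment to the undefined entries satisfying the anti-ultrametric property.

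\textbf{Necessity.} Suppose $H$ admits an M${}^\natural$-convex completion $\widehat{H}$ on the complete graph. I would first establish the general lemma: if $\widehat{H}$ satisfies anti-ultrametry on $K_n$, then every cycle (of any length) in the complete graph attains its minimum edge weight at least twice. This is proved by induction on cycle length: the base case $k=3$ is the anti-ultrametric axiom, and for $k>3$ one uses a diagonal (which exists in $K_n$) to split the cycle into two shorter ones and tracks where a supposedly unique minimizer sits. Applying this to any chordless cycle $C$ of $G_H$ (whose edges retain their original weights $w$) gives $|\argmin_{e \in C} w(e)| \geq 2$.

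\textbf{Sufficiency.} Assuming the chordless-cycle condition holds, my first step is to upgrade it to: \emph{every} cycle of $G_H$ attains its minimum at least twice. I would induct on cycle length, and for a non-chordless cycle $C$ with chord $e \in E_H$ splitting $C$ into $C_1, C_2$, do a case analysis on $w(e)$ versus $m := \min_{e'\in C} w(e')$: if $w(e) > m$, the shorter cycle containing the unique minimizer of $C$ inherits a unique minimum, contradicting the inductive hypothesis; if $w(e) \leq m$, then $e$ itself serves as a second minimizer of $C$ (or a strictly smaller one, which is impossible). The second step is the explicit completion via \emph{maximum-bottleneck paths}: for $i,j$ in the same component of $G_H$ set
\[
\widehat{h}_{ij} := \max_{P : i \leadsto j \text{ in } G_H} \min_{e \in P} w(e),
\]
and for $i,j$ in different components set $\widehat{h}_{ij} := 0$. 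Anti-ultrametry $\widehat{h}_{ij} \geq \min\{\widehat{h}_{ik},\widehat{h}_{jk}\}$ follows because concatenating an optimal $i$-$k$ path with an optimal $k$-$j$ path is an $i$-$j$ path of bottleneck $\min\{\widehat{h}_{ik},\widehat{h}_{jk}\}$; nonnegativity is automatic since $w \geq 0$ by (\ref{eq:assumption h_ij}); and the mixed-component cases reduce to checking triangles in which at least two edges have weight~$0$, which is trivially satisfied.

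The main technical point, and the step I expect to be the real obstacle, is verifying that $\widehat{h}_{ij} = h_{ij}$ on every defined edge $\{i,j\} \in E_H$. Trivially $\widehat{h}_{ij} \geq h_{ij}$ since the single edge $\{i,j\}$ is itself an $i$-$j$ path. For the reverse inequality, suppose some $i$-$j$ path $P$ in $G_H$ has bottleneck strictly greater than $h_{ij}$; then adjoining the edge $\{i,j\}$ to $P$ produces a cycle of $G_H$ whose minimum weight equals $h_{ij}$ and is attained \emph{only} at the edge $\{i,j\}$, contradicting the strengthened cycle condition from the first step. This closes the argument: $\widehat{H}$ extends $H$ and satisfies the two conditions of Lemma~\ref{lem:Mnatural {0,1}}, so $H$ is M${}^\natural$-convex completable.
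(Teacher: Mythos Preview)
Your argument is sound and takes a genuinely different route from the paper's. For necessity, the paper walks directly along a chordless cycle with a unique minimum edge and shows, triangle by triangle, that any completion is forced to assign the minimum value to successive diagonals until anti-ultrametry fails at the last triangle; you instead prove the stronger lemma that \emph{every} cycle in an anti-ultrametric $K_n$ has the min-twice property, by induction using a single diagonal split. For sufficiency, the paper works through Lemma~\ref{lem:anti-ultrametric} and the threshold graphs $G_H^\alpha$: it shows the chordless-cycle hypothesis forces each $\tilde G_H^\alpha$ to be a disjoint union of cliques, and then reads off the completion from that clique structure. Your max-bottleneck completion is the same object in different language---the largest $\alpha$ for which $i$ and $j$ lie in the same component of $G_H^\alpha$ is exactly $\max_P \min_{e\in P} w(e)$---but your verification bypasses the level-set machinery entirely and is more self-contained. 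The paper's formulation has the advantage of tying directly into the algorithm of Remark~\ref{rem:completion}; yours is arguably cleaner as a pure existence proof.

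One slip to correct: in Step~1 of your sufficiency argument, the case $w(e)\le m$ is not handled as written. You say ``$e$ itself serves as a second minimizer of $C$,'' but $e$ is a chord, not an edge of $C$, so it cannot be a minimizer of $C$. The fix is symmetric to your $w(e)>m$ case: if $w(e)\le m$, look at the sub-cycle $C_2$ \emph{not} containing the unique $C$-minimizer $f$; every edge of $C_2$ other than $e$ lies in $C\setminus\{f\}$ and hence has weight strictly greater than $m\ge w(e)$, so $e$ is the unique minimizer of $C_2$, contradicting the inductive hypothesis. With this correction your upgrade from chordless cycles to all cycles goes through, and the rest of the argument is fine.
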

The proof of Theorem~\ref{thm:Mnatural conv completable} is in Section~\ref{sec:proofs}.

\begin{remark}\label{rem:completion}
	\upshape
	Farach--Kannan--Warnow~\cite{Algo/FKW95} introduced the {\it matrix sandwich problem for ultrametric property},
	which contains the M${}^\natural$-convex completion problem as a special case.
	They also constructed an $O(m + n\log n)$-time algorithm for the matrix sandwich problem for ultrametric property,
	where $m$ is the number of defined elements.
	In our setting, $m$ = $O(n^2)$.
	Hence, by using this algorithm,
	we can obtain an appropriate M${}^\natural$-convex completion in $O(n^2)$ time if one exists.
	An $O(n^2)$-time algorithm based on Farach--Kannan--Warnow's algorithm is the following:
	Suppose that all $h_{ij}$ are finite
	(if there exists $h_{ij}$ with $h_{ij} = +\infty$,
	then we can redefine the value of $h_{ij}$ as a sufficiently large finite value $M$).
	Take any maximum forest $F$ of $G_H$.
	Let $\alpha_1 > \alpha_2 > \cdots > \alpha_p$ be the distinct values of defined elements of $(h_{ij})_{\{i,j\} \in F}$.
	For $k = 1, \dots, p-1$,
	let $F^{\alpha_k}$ be the subgraph of $F$ induced by the edges with weight at least $\alpha_k$,
	i.e., $F^{\alpha_k} := \{ \{i,j\} \in F \mid h_{ij} \geq \alpha_k \}$.
	Then, for each $\{i,j\} \not\in E_H$ with $i,j$ connected in $G_H$, set $h_{ij}$ to $\alpha_k$,
	where $k$ is the minimum number such that $i,j$ is connected in $F^{\alpha_k}$.
	For each $\{ i,j \} \not\in E_H$ with $i, j$ disconnected in $G_H$,
	set $h_{ij}$ to $\alpha_{p}$.
\end{remark}

In this paper,
we present a graphic characterization of M${}^\natural$-convex completability.
With this characterization,
we provide a DCA interpretation of polynomial-time solvability of JWP.

\paragraph{Transformation into a Function over $\{0,1\}$.}
To connect JWP and M${}^\natural$-convexity,
we introduce a transformation of a function $f : D_1 \times D_2 \times \cdots \times D_r \rightarrow \overline{\mathbf{R}}$ into a function $\hat{f} : \{0, 1\}^U \rightarrow \overline{\mathbf{R}}$,
where $U$ is the set of all assignments to variables, that is,
\begin{align*}
U := \{(1,1),(1,2), \dots, (1,d_1),(2,1),(2,2), \dots, (2,d_2),\dots,(r,1),(r,2), \dots, (r, d_r)\}.
\end{align*}
We consider the following correspondence between $x = (x_1, x_2, \dots, x_r) \in D_1 \times D_2 \times \dots \times D_r$ and $\hat{x} = (\hat{x}_{(1,1)}, \dots, \hat{x}_{(1,d_1)}, \hat{x}_{(2,1)}, \dots, \hat{x}_{(2,d_2)}, \dots, \hat{x}_{(r,1)}, \dots, \hat{x}_{(r,d_r)}) \in \{0,1\}^U$:
\begin{align}\label{eq:D to {0,1}}
(x_1, x_2, \dots, x_r) \mapsto (\underbrace{0, \dots, 0,\overset{(1,x_1)}{\check{1}},0, \dots, 0}_{d_1}, \underbrace{0, \dots,0, \overset{(2,x_2)}{\check{1}},0, \dots, 0}_{d_2}, \dots, \underbrace{0, \dots, 0,\overset{(r,x_r)}{\check{1}},0, \dots, 0}_{d_r}). 
\end{align}
That is, $\hat{x}_{(i,a)} = 1$ means that we assign $a$ to $x_i$,
and $\hat{x}_{(i,a)} = 0$ means that we do not.
In view of~(\ref{eq:D to {0,1}}), define a function $\hat{f}$ by
\begin{align*}
\hat{f}(\hat{x}) := \begin{cases} f(x) & \text{if there exists $x$ satisfying~(\ref{eq:D to {0,1}})}, \\ +\infty & \text{otherwise} \end{cases}
 \qquad (\hat{x} \in \{0, 1\}^U).
\end{align*}
Note that minimizing $f$ is equivalent to minimizing $\hat{f}$.

Now we consider the transformation of $f$ of the form (\ref{eq:binary VCSP}) into $\hat{f}$,
where $f$ is given in terms of $c_i$ for $i \in [r]$ and $c_{ij}$ for $i,j \in [r]$.
We define $\overline{f} : \{0,1\}^U \rightarrow \overline{\mathbf{R}}_+$ by
\begin{align}\label{eq:overline f}
\overline{f}(\hat{x}) := \sum_{(i,a) \in U} c_i(a)\hat{x}_{(i,a)} + \sum_{(i,a), (j,b) \in U,\ (i,a) \neq (j,b)} h_{(i,a),(j,b)}\hat{x}_{(i,a)} \hat{x}_{(j,b)} \qquad (\hat{x} \in \{0, 1\}^U),
\end{align}
where
\begin{align}\label{eq:h_(i,a)(j,b)}
h_{(i,a), (j,b)} := \begin{cases} c_{ij}(a, b) & \text{if $i \neq j$},\\
\text{undefined} & \text{if $i = j$}. \end{cases}
\end{align}
We also define $\delta_U : \{0,1\}^U \rightarrow \overline{\mathbf{R}}$ by
\begin{align*}
\delta_U(\hat{x}) :=
\begin{cases}
0 & \text{if there exists $x$ satisfying~(\ref{eq:D to {0,1}})},
\\ +\infty & \text{otherwise}
\end{cases} \qquad (\hat{x} \in \{0, 1\}^U),
\end{align*}
which is the indicator function for the feasible assignments.
Then we have
\begin{align*}
\hat{f}(\hat{x}) = \overline{f}(\hat{x}) + \delta_U(\hat{x}) \qquad (\hat{x} \in \{0,1\}^U),
\end{align*}
where arbitrary values in $\overline{\mathbf{R}}$ may be assigned to the undefined elements $h_{(i,a), (i,b)}$ in $\overline{f}$
without affecting the value of $\hat{f}$.
Indeed, if $\hat{x} \in \textrm{dom }\delta_U$,
then $\hat{x}_{(i,a)} \hat{x}_{(i, b)} = 0$ for all $i \in [r]$ and all distinct $a, b \in D_i$.
Hence $h_{(i,a), (i,b)} \hat{x}_{(i,a)} \hat{x}_{(i,b)} = 0$ holds for each undefined element $h_{(i,a), (i,b)}$ by the definition~(\ref{eq:h_(i,a)(j,b)}).
In particular, the set of minimizers of both $\hat{f}(\hat{x})$ and $\overline{f}(\hat{x}) + \delta_U(\hat{x})$ are the same.

It is clear that $\delta_U$ is M${}^\natural$-convex
(${\rm dom}\ \delta_U$ is the base family of a partition matroid, which is a direct sum of matroids of rank 1).
Hence if $(h_{(i,a), (j,b)})_{(i,a), (j,b) \in U}$ has an M${}^\natural$-convex completion $(\tilde{h}_{(i,a), (j,b)})_{(i,a), (j,b) \in U}$,
then $\overline{f}$ defined by $(\tilde{h}_{(i,a), (j,b)})_{(i,a), (j,b) \in U}$ is M${}^\natural$-convex and $\hat{f} = \overline{f} + \delta_U$ is M${}^\natural_2$-convex.
This means that $\hat{f}$ can be minimized in polynomial time.
We need the values of $(\tilde{h}_{(i,a), (j,b)})_{(i,a), (j,b) \in U}$
in a minimization algorithm of M${}^\natural_2$-convex functions.

A function of the form (\ref{eq:binary VCSP}) satisfies the JWP if and only if
$h_{(i,a),(j,b)} \geq \min\{ h_{(j,b), (k,c)}, h_{(i,a),(k,c)} \}$ holds for defined elements $h_{(i,a),(j,b)}, h_{(j,b), (k,c)}, h_{(i,a),(k,c)}$ given in~(\ref{eq:overline f}).
Hence $(h_{(i,a), (j,b)})_{(i,a), (j,b) \in U}$ satisfies the assumptions (\ref{eq:assumption h_ij}) and (\ref{eq:assumption h_ij h_jk h_ik}) for the M${}^\natural$-convex completion problem.
Theorem~\ref{thm:Mnatural conv completable} implies
the following theorem (the proof is in Section~\ref{sec:proofs}).
\begin{thm}\label{thm:main}
	For a function $f$ of the form~{\rm (\ref{eq:binary VCSP})},
	let $(h_{(i,a), (j,b)})_{(i,a), (j,b) \in U}$ be defined by~{\rm (\ref{eq:h_(i,a)(j,b)})}.
	Then $(h_{(i,a), (j,b)})_{(i,a), (j,b) \in U}$ is M${}^\natural$-convex completable
	if and only if $f$ (has the JWP and) is Z-free.
\end{thm}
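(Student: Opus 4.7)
The plan is to apply Theorem~\ref{thm:Mnatural conv completable} to the assignment graph $G_H$ of $H := (h_{(i,a),(j,b)})_{(i,a),(j,b) \in U}$. By (\ref{eq:h_(i,a)(j,b)}), the edge $\{(i,a),(j,b)\}$ appears in $G_H$ precisely when $i \neq j$, so $G_H$ is the complete $r$-partite graph with parts $V_i := \{(i,a) : a \in D_i\}$ for $i \in [r]$. The proof then reduces to (a) classifying the chordless cycles of a complete multipartite graph and (b) matching the per-cycle condition $|\argmin_{e \in C} w(e)| \geq 2$ with the JWP and the Z-free condition.

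For (a), I would argue that the only chordless cycles of $G_H$ are triangles whose three vertices lie in three distinct parts, and 4-cycles of the form $(i,a)$-$(j,c)$-$(i,b)$-$(j,d)$ with $i \neq j$, $a \neq b$, $c \neq d$. In any chordless cycle $v_1 v_2 \cdots v_k v_1$ with $k \geq 4$, every non-consecutive pair is a non-edge, which in a complete multipartite graph forces the two vertices into the same part. For $k \geq 5$ this yields $v_1, v_3$ in one part and $v_1, v_4$ in one part, hence $v_3, v_4$ share a part, contradicting that consecutive vertices must be adjacent. For $k = 4$ it yields the claimed alternating structure, and for $k = 3$ the three vertices are pairwise adjacent, hence in three distinct parts.

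For (b), a chordless triangle $\{(i,a),(j,b),(k,c)\}$ has edge weights $c_{ij}(a,b)$, $c_{jk}(b,c)$, $c_{ik}(a,c)$. A short elementary check shows that for any three reals $x,y,z$, the condition $|\argmin\{x,y,z\}| \geq 2$ is equivalent to the three symmetric inequalities ``each of $x,y,z$ is at least the minimum of the other two''. Quantifying over all distinct $i,j,k$ and all $a \in D_i, b \in D_j, c \in D_k$, this is exactly the JWP (\ref{eq:JWP}). A chordless 4-cycle $(i,a)$-$(j,c)$-$(i,b)$-$(j,d)$ has edge weights $c_{ij}(a,c), c_{ij}(b,c), c_{ij}(b,d), c_{ij}(a,d)$, and its $|\argmin| \geq 2$ condition is verbatim (\ref{eq:Z-free}).

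Combining (a) and (b) through Theorem~\ref{thm:Mnatural conv completable} gives the stated equivalence in both directions. The main obstacle is the chordless-cycle classification in step (a); the translations in step (b) are mechanical once the elementary three-value reformulation of ``$|\argmin| \geq 2$'' is noted, and the ambient assumptions (\ref{eq:assumption h_ij}) and (\ref{eq:assumption h_ij h_jk h_ik}) needed to invoke Theorem~\ref{thm:Mnatural conv completable} are already verified in the text preceding the statement.
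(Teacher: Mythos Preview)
Your proposal is correct and follows essentially the same route as the paper: apply Theorem~\ref{thm:Mnatural conv completable} to the assignment graph, observe it is complete $r$-partite, classify its chordless cycles as triangles (giving JWP) and alternating $4$-cycles (giving Z-freeness), and conclude. Your argument for ruling out chordless cycles of length $\geq 5$ via the ``non-consecutive $\Rightarrow$ same part'' observation is exactly the paper's index argument rephrased in multipartite language.
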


\section{Algorithm}
By using a general algorithm for the
M${}^\natural$-convex intersection (minimization of M${}^\natural_2$-convex functions),
we can minimize Z-free functions of the form~(\ref{eq:binary VCSP}) in polynomial time.
Suppose that we are given $c_i : D_i \rightarrow \mathbf{R}_+$ for $i \in [r]$, $c_{ij} : D_i \times D_j \rightarrow \overline{\mathbf{R}}_+$ for $1 \leq i < j \leq r$,
and a Z-free function $f$ defined as~(\ref{eq:binary VCSP}).
We can minimize $f$ by minimizing 
$\hat{f} = \overline{f} + \delta_U$
with an M${}^\natural$-convex intersection algorithm.

Here we take advantage of the fact that
all the vectors in $\textrm{dom }\delta_U$
have a constant component sum, i.e.,
$\sum_{(i,a) \in U} \hat{x}_{(i,a)} = r$
for all $\hat{x} \in \textrm{dom }\delta_U$.
This implies
that 
$\delta_U$ is an M-convex function~\cite{book/Murota03}
and we can use an {\it M-convex intersection} algorithm.
An M-convex intersection algorithm is easier to describe than an M${}^\natural$-convex intersection algorithm,
though the time complexity is the same.
Therefore we devise a minimization algorithm for Z-free functions via an M-convex intersection algorithm.
Since the functions are defined on $\{0,1\}^n$, the proposed algorithm is actually a variant of valuated matroid intersection algorithms~\cite{book/Murota00}.
Specifically, let $\overline{f}|_r$ denote the restriction of 
$\overline{f}$ to the hyperplane containing $\textrm{dom }\delta_U$, i.e.,
\begin{align*}
\overline{f}|_r(\hat{x}) :=
\begin{cases}
\displaystyle
\overline{f}(\hat{x}) 
&
\text{if $\displaystyle\sum_{(i,a) \in U} \hat{x}_{(i,a)} = r$},\\
+\infty & \text{otherwise}.
\end{cases}
\end{align*}
Then 
minimizing $\overline{f} + \delta_U$
is equivalent to minimizing 
$\overline{f}|_r + \delta_U$,
where $\overline{f}|_r$ and $\delta_U$ are M-convex functions.

The proposed algorithm consists of three steps.

\begin{description}
	\item[Step 1:] 
	On the basis of Theorem \ref{thm:main},
	we construct an M${}^\natural$-convex function  
	$\overline{f} : \{0,1\}^U \rightarrow \overline{\mathbf{R}}$ 
	in~(\ref{eq:overline f})
	through an M${}^\natural$-convex completion
	of $(h_{(i,a),(j,b)})_{(i,a),(j,b) \in U}$ in~(\ref{eq:h_(i,a)(j,b)}).
	
	\item[Step 2:] We find a minimizer of $\overline{f}|_r$,
	to be used as an initial solution in Step 3.

	\item[Step 3:] 
	We find a minimizer of 
	$\overline{f}|_r + \delta_U$
	by the successive shortest path algorithm with potentials for 
	the M-convex intersection~\cite{book/Murota03} 
	(see also~\cite[Section~5.2]{book/Murota00}).
\end{description}

In Step 3 of the algorithm,
we use the
{\it auxiliary graph} 
$G_{\hat{x}, \hat{y}} = (V, E_{\hat{x}, \hat{y}})$ 
defined 
for $\hat{x} \in \textrm{dom }\overline{f}|_r$ and $\hat{y} \in \textrm{dom }\delta_U$
by
\begin{align}
V &:= \{s,t\} \cup U,\label{eq:V}\\
E_{\hat{x}} &:= \{ ((i,a),(j,b)) \mid (i,a),(j,b) \in U,\ \hat{x} + \chi_{(j,b)} - \chi_{(i,a)} \in \textrm{dom }\overline{f}|_r \},\label{eq:E_x}\\
E_{\hat{y}} &:= \{ ((i,a),(j,b)) \mid (i,a),(j,b) \in U,\ \hat{y} + \chi_{(i,a)} - \chi_{(j,b)} \in \textrm{dom }\delta_U \},\label{eq:E_y}\\
E^+ &:= \{ (s, (i,a)) \mid (i,a) \in \textrm{supp}^+(\hat{x} - \hat{y}) \},\label{eq:E^+}\\
E^- &:= \{ ((j,b), t) \mid (j,b) \in \textrm{supp}^+(\hat{y} - \hat{x}) \},\label{eq:E^-}\\
E_{\hat{x}, \hat{y}} &:= E_{\hat{x}} \cup E_{\hat{y}} \cup E^+ \cup E^-
\end{align}
with the arc length function 
$\ell = \ell_{\hat{x}, \hat{y}} : E_{\hat{x}, \hat{y}} \rightarrow \mathbf{R}$ 
given by
\begin{align}
\ell (u,v)
&:=
\begin{cases}
\overline{f}|_r(\hat{x} + \chi_{v} - \chi_{u}) - \overline{f}|_r(\hat{x}) & \text{if $(u,v) \in E_{\hat{x}}$},\\
0 & \text{otherwise}.
\end{cases}\label{eq:w}
\end{align}
Note that, by the definition of $\delta_U$, we can also describe $E_{\hat{y}}$ as $E_{\hat{y}} = \{ ((i,a),(i,b)) \mid i \in [r],\ a, b \in D_i,\ (i,a) \not\in \textrm{supp}^+(\hat{y}),\ (i,b) \in \textrm{supp}^+(\hat{y}) \}$.

\begin{description}
	\item[Algorithm for Z-free function minimization:]
	\item[Step 1:] Find an M${}^\natural$-convex completion $(\tilde{h}_{(i,a), (j,b)})_{(i,a), (j,b) \in U}$ of $(h_{(i,a), (j,b)})_{(i,a), (j,b) \in U}$,
	and define 
	$\overline{f} : \{0,1\}^U \rightarrow \overline{\mathbf{R}}$ 
	by
	\begin{align*}
	\overline{f}(\hat{x}) =
	\sum_{(i,a) \in U} c_i(a)\hat{x}_{(i,a)} + \sum_{(i,a), (j,b) \in U,\ (i,a) \neq (j,b)} \tilde{h}_{(i,a),(j,b)}\hat{x}_{(i,a)} \hat{x}_{(j,b)} \qquad (\hat{x} \in \{0,1\}^U).
	\end{align*}
	
	\item[Step 2:] Let $\hat{x}^* \in \{0,1\}^U$ be the zero vector.
	While $\sum_{(i,a) \in U} \hat{x}^*_{(i,a)} < r$, do the following:
	
	\begin{description}
		\item[Step 2-1:] 
		Obtain $(i,a)^* \in \argmin \{ \overline{f}(\hat{x}^* + \chi_{(i,a)}) \mid (i,a) \in U \setminus \textrm{supp}^+(\hat{x}^*) \}$.
		\item[Step 2-2:] 
		$\hat{x}^* \leftarrow \hat{x}^* + \chi_{(i,a)^*}$.
	\end{description}
	\item[Step 3:] 
	Let $p : V \rightarrow \mathbf{R}$ be a potential defined by $p(v) := 0$ for $v \in \{s,t\} \cup U$.
	Take any $\hat{y}^* \in \textrm{dom }\delta_U$.
	While $\hat{x}^* \neq \hat{y}^*$, do the following:
	\begin{description}
		\item[Step 3-1:]
		Make the auxiliary graph $G_{\hat{x}^*, \hat{y}^*}$.
		Define the modified arc length 
		$\ell_{p} : E_{\hat{x}^*, \hat{y}^*} \rightarrow \mathbf{R}$ 
		by 
		$\ell_{p}(u,v) := \ell(u,v) + p(u) - p(v)$ 
		for $(u,v) \in E_{\hat{x}^*, \hat{y}^*}$.
		
		\item[Step 3-2:] 
		For each $v \in V$, compute the length $\Delta p(v)$ of 
		an
		$s$-$v$ shortest path in $G_{\hat{x}^*, \hat{y}^*}$ with respect to 
		the modified arc length 
		$\ell_{p}$.
		Let
		$P$ be an $s$-$t$ shortest path having the smallest number of arcs in $G_{\hat{x}^*, \hat{y}^*}$ with respect to the modified arc length 
		$\ell_{p}$.
		\item[Step 3-3:] 
		For $(i,a) \in U$,
		\begin{align*}
		&\hat{x}^*_{(i,a)} \leftarrow
		\begin{cases}
		\hat{x}^*_{(i,a)} - 1 & \text{if $((i,a),(j,b)) \in P \cap E_{\hat{x}^*}$},\\
		\hat{x}^*_{(i,a)} + 1 & \text{if $((j,b),(i,a)) \in P \cap E_{\hat{x}^*}$},\\
		\hat{x}^*_{(i,a)} & \text{otherwise},
		\end{cases}\\
		&\hat{y}^*_{(i,a)} \leftarrow
		\begin{cases}
		\hat{y}^*_{(i,a)} + 1 & \text{if $((i,a),(j,b)) \in P \cap E_{\hat{y}^*}$},\\
		\hat{y}^*_{(i,a)} - 1 & \text{if $((j,b),(i,a)) \in P \cap E_{\hat{y}^*}$},\\
		\hat{y}^*_{(i,a)} & \text{otherwise}.
		\end{cases}
		\end{align*}
		For $v \in V$, $p(v) \leftarrow p(v) + \Delta p(v)$.
		\qed
	\end{description}
\end{description}

At the end of Step~2,
we obtain a minimizer of $\overline{f}|_r$.
The validity of Step~2 is given
in~\cite[Theorem~3.2]{MOR/MS99}.
The time complexity of this algorithm is as follows,
where $n := |U| = \sum_{i \in [r]} d_i$ (the proof is in Section~\ref{sec:proofs}).
\begin{thm}\label{thm:algo}
	The proposed algorithm runs
	in $O(nr^3 + nr \log n + n^2)$ time.
\end{thm}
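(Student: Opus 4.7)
The plan is to bound the running time of each of the three steps separately and then sum. For Step~1, Remark~\ref{rem:completion} says that Farach--Kannan--Warnow's algorithm computes an M${}^\natural$-convex completion (or detects infeasibility) in $O(m + n \log n)$ time, where $m = |E_H| = O(n^2)$, giving $O(n^2)$. For Step~2, I would maintain an auxiliary array that stores, for every $(i,a) \in U$, the marginal cost $\overline{f}(\hat{x}^* + \chi_{(i,a)}) - \overline{f}(\hat{x}^*) = c_i(a) + \sum_{(j,b) \in \textrm{supp}^+(\hat{x}^*)} \tilde h_{(i,a),(j,b)}$; then finding the argmin is $O(n)$, and after adding $\chi_{(i,a)^*}$ to $\hat{x}^*$ the array is refreshed in $O(n)$ by adding the appropriate $\tilde h_{(i,a),(i,a)^*}$ values. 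Since there are $r$ outer iterations, Step~2 costs $O(nr)$.

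Step~3 is the dominant contributor, and the plan is to bound both the number of outer iterations and the work per iteration. Since both $\hat{x}^*$ and $\hat{y}^*$ have coordinate sum $r$, their initial $\ell_1$ distance is at most $2r$; by the standard analysis of the successive shortest path algorithm for M-convex intersection (see~\cite[Section~5.2]{book/Murota00}), each augmentation strictly decreases this $\ell_1$ distance by exactly $2$, so there are at most $r$ outer iterations. Inside one iteration the auxiliary graph has $|V| = n + 2$, and its arcs split into four groups: arcs of $E_{\hat{x}^*}$ go from $\textrm{supp}^+(\hat{x}^*)$ (size $r$) to $U \setminus \textrm{supp}^+(\hat{x}^*)$ (size $n-r$), so $|E_{\hat{x}^*}| \leq r(n-r) = O(nr)$; by the alternative description of $E_{\hat{y}^*}$ given below the arc-length definition, $|E_{\hat{y}^*}| \leq n - r = O(n)$; and $|E^+| + |E^-| \leq 2r = O(r)$. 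Thus $|V| + |E_{\hat{x}^*, \hat{y}^*}| = O(nr)$.

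For each $((i,a),(j,b)) \in E_{\hat{x}^*}$ the arc length expands as
\begin{align*}
\ell((i,a),(j,b)) = c_j(b) - c_i(a) + \sum_{(k,c) \in \textrm{supp}^+(\hat{x}^*) \setminus \{(i,a)\}} \bigl(\tilde{h}_{(j,b),(k,c)} - \tilde{h}_{(i,a),(k,c)}\bigr),
\end{align*}
which is computed in $O(r)$ time, so all arc lengths in $E_{\hat{x}^*}$ cost $O(nr \cdot r) = O(nr^2)$ per iteration; the remaining arcs in $E_{\hat{y}^*}, E^+, E^-$ have length $0$. Dijkstra on the potential-adjusted graph, using a Fibonacci heap, costs $O(|E| + |V| \log |V|) = O(nr + n \log n)$ per iteration, and the final augmentation along the path $P$ is $O(n)$. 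Hence one iteration costs $O(nr^2 + n \log n)$ and Step~3 contributes $O(nr^3 + nr \log n)$ over $r$ iterations. Summing the contributions of the three steps gives the claimed bound $O(nr^3 + nr \log n + n^2)$.

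The main obstacle I expect is pinning down the iteration count $O(r)$ for Step~3; this is not immediate from the pseudocode and relies on invoking the standard property of the M-convex intersection SSP algorithm (the $\ell_1$-distance decreases by $2$ per augmentation because the shortest augmenting path has minimal arc number). Everything else is routine bookkeeping of arc counts and per-arc computation.
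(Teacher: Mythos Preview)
Your proof is correct and follows the same step-by-step bookkeeping as the paper's own proof: $O(n^2)$ for Step~1 via Remark~\ref{rem:completion}, a greedy analysis for Step~2, and $O(nr^3 + nr\log n)$ for Step~3 by bounding $|E_{\hat{x}^*}| = O(nr)$, computing each arc length in $O(r)$, running Dijkstra in $O(nr + n\log n)$, and observing that $\|\hat{x}^* - \hat{y}^*\|_1$ drops by $2$ per augmentation so that there are at most $r$ iterations. The only deviation is that your incremental-array trick gives $O(nr)$ for Step~2 whereas the paper simply recomputes each marginal in $O(r)$ for a total of $O(nr^2)$; since Step~3 dominates, the overall bound is identical.
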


By improving the algorithm of running time $O(n^3)$ given in~\cite{AI/CZ11},
Cooper--\v{Z}ivn\'{y}~\cite{JAIR/CZ12} gave an $O(n^2 \log n \log r)$-time algorithm for minimizing Z-free functions of the form~(\ref{eq:binary VCSP}).
Our proposed algorithm is faster than Cooper--\v{Z}ivn\'y's for some $r$ (e.g., $r = O(n^{1/3})$).

\begin{remark}\label{rem:oracle vs VCSP}
	\upshape
	In the VCSP framework,
	we assume that the function $f$ of the form~(\ref{eq:binary VCSP}) is explicitly given.
	This means that the input size is proportional to
	\begin{align*}
		\sum_{i \in [r]} d_i + \sum_{i \in [r]} \sum_{d \in D_i} \log c_i(d) + \sum_{1 \leq i < j \leq r} \sum_{d \in D_i} \sum_{e \in D_j} \log c_{ij}(d, e),
	\end{align*}
	and then the running time in Theorem~\ref{thm:algo} is strongly polynomial in the input size.
	On the other hand,
	if we assume that $f$ is given by the value oracles for the functions $c_i$ and $c_{ij}$,
	the input size of $f$ is proportional to
	\begin{align*}
		r + \sum_{i \in [r]} \log d_i + \sum_{i \in [r]} \sum_{d \in D_i} \log c_i(d) + \sum_{1 \leq i < j \leq r} \sum_{d \in D_i} \sum_{e \in D_j} \log c_{ij}(d, e).
	\end{align*}
	In this case,
	the running time in Theorem~\ref{thm:algo} is pseudo-polynomial in the input size.
\end{remark}

\section{Proofs}\label{sec:proofs}
In this section, we give the proofs of Lemma~\ref{lem:Mnatural {0,1}},
Theorem~\ref{thm:Mnatural conv completable}, Theorem~\ref{thm:main}, and Theorem~\ref{thm:algo}.

\paragraph{Proof of Lemma~\ref{lem:Mnatural {0,1}}.}
(only-if part).
Suppose that there exist distinct $i,j,k \in [n]$ such that $h_{ij} < \min \{h_{jk}, h_{ik}\}$.
Note that $h_{ij} < +\infty$ holds.
Then
\begin{align*}
f(\chi_i + \chi_j) + f(\chi_k) = h_i+h_j+h_k + h_{ij} &< h_i+ h_j + h_k + h_{jk} = f(\chi_j + \chi_k) + f(\chi_i),\\
f(\chi_i + \chi_j) + f(\chi_k) = h_i+h_j+h_k + h_{ij} &< h_i+ h_j + h_k + h_{ik} = f(\chi_i + \chi_k) + f(\chi_j)
\end{align*}
hold since $h_i, h_j,h_k,h_{ij} < +\infty$.
By Condition~1 of Theorem~\ref{thm:{0,1}M}, $f$ is not M${}^\natural$-convex.

Suppose that there exist distinct $i,j \in [n]$ such that $h_{ij} < 0 (<+\infty)$.
Then
\begin{align*}
f(\chi_i + \chi_j) + f(\chi_0) = h_i+h_j + h_{ij} < h_i+ h_j = f(\chi_i) + f(\chi_j)
\end{align*}
holds since $h_i, h_j < +\infty$.
By Condition~2 of Theorem~\ref{thm:{0,1}M}, $f$ is not M${}^\natural$-convex.

(if part).
Take arbitrary distinct $i,j,k \in [n]$ and $z \in \{0,1\}^n$ with $\textrm{supp}^+(z) \subseteq [n] \setminus \{i,j,k\}$.
If $f(z+\chi_i+\chi_j) = +\infty$ or $f(z + \chi_k) = +\infty$ holds,
then Condition~1 of Theorem~\ref{thm:{0,1}M} obviously holds.
We assume $f(z+\chi_i+\chi_j) < +\infty$ and $f(z + \chi_k) < +\infty$.

It holds that
\begin{align}
f(z + \chi_i + \chi_j) &= f(z) + h_i + h_j + \sum_{p \in \textrm{supp}^+(z)} h_{ip} + \sum_{p \in \textrm{supp}^+(z)} h_{jp} + h_{ij},\label{eq:f(z+i+j)}\\
f(z + \chi_k) &= f(z) + h_k + \sum_{p \in \textrm{supp}^+(z)} h_{kp}.\label{eq:f(z+k)}
\end{align}
Note that all terms appearing in (\ref{eq:f(z+i+j)}) and (\ref{eq:f(z+k)}) have finite values
since $f(z+\chi_i+\chi_j) < +\infty$ and $f(z + \chi_k) < +\infty$ hold.
Then we have
\begin{align*}
&f(z + \chi_i + \chi_j) + f(z + \chi_k) \geq f(z + \chi_j + \chi_k) + f(z + \chi_i)\\
\Leftrightarrow\ &2f(z) + h_i + h_j + h_k + \sum_{p \in \textrm{supp}^+(z)} h_{ip} + \sum_{p \in \textrm{supp}^+(z)} h_{jp} + \sum_{p \in \textrm{supp}^+(z)} h_{kp} + h_{ij}\\
&\geq 2f(z) + h_j + h_k + h_i + \sum_{p \in \textrm{supp}^+(z)} h_{jp} + \sum_{p \in \textrm{supp}^+(z)} h_{kp} + \sum_{p \in \textrm{supp}^+(z)} h_{ip} + h_{jk}\\
\Leftrightarrow\ &h_{ij} \geq h_{jk}.
\end{align*}
Also we have
\begin{align*}
&f(z + \chi_i + \chi_j) + f(z + \chi_k) \geq f(z + \chi_i + \chi_k) + f(z + \chi_j)\\
\Leftrightarrow\ &h_{ij} \geq h_{ik}.
\end{align*}
By the assumption, it holds that $h_{ij} \geq \min\{h_{jk}, h_{ik}\}$.
Hence we obtain
\begin{align*}
f(z + \chi_i + \chi_j) + f(z + \chi_k) \geq \min\{f(z + \chi_j + \chi_k) + f(z + \chi_i), f(z + \chi_i + \chi_k) + f(z + \chi_j)\}.
\end{align*}

By the assumption of $h_{ij} \geq 0$,
we also obtain
\begin{align*}
f(z + \chi_i + \chi_j) + f(z) \geq f(z + \chi_i) + f(z + \chi_j)
\end{align*}
for all distinct $i, j \in [n]$.

\paragraph{Proof of Theorem~\ref{thm:Mnatural conv completable}.}
First we give a graphical interpretation for the anti-ultrametric property.
For $H := (h_{ij})_{i,j \in [n]}$ and $\alpha \in \overline{\mathbf{R}}$, let us define $E_H^{\alpha}$ and $V_H^{\alpha}$ by
\begin{align}
	E_H^\alpha &:= \{ \{i,j\} \in E_H \mid h_{ij} \geq \alpha \},\label{eq:E_H^alpha}\\
	V_H^\alpha &:= \{ i \mid \exists e \in E_H^\alpha \text{ such that } i \in e \}\label{eq:V_H^alpha}.
\end{align}
Let $G_H^{\alpha} := (V_H^{\alpha},E_H^{\alpha})$.
Then the following lemma holds:
\begin{lem}\label{lem:anti-ultrametric}
	$H := (h_{ij})_{i,j \in [n]}$ satisfies the anti-ultrametric property if and only if
	each connected component of $G_H^{\alpha}$ is a complete graph for every $\alpha \in \overline{\mathbf{R}}$.
\end{lem}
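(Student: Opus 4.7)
The plan is to translate the anti-ultrametric inequality into a closure property of the threshold graph $G_H^\alpha$, and to recognize the resulting condition as each connected component being a clique; the two directions are essentially dual.

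For the only-if direction, I would fix $\alpha \in \overline{\mathbf{R}}$, pick a connected component $C$ of $G_H^\alpha$, and show by induction on the shortest-path distance $\ell$ in $G_H^\alpha$ that any two vertices $i, j \in C$ satisfy $\{i, j\} \in E_H^\alpha$. The base case $\ell = 1$ is the definition of adjacency. For $\ell \geq 2$, take a shortest path $i = v_0, v_1, \ldots, v_\ell = j$ in $G_H^\alpha$; the inductive hypothesis applied to the prefix of length $\ell - 1$ gives $h_{v_0 v_{\ell-1}} \geq \alpha$, and the last edge of the path gives $h_{v_{\ell-1} v_\ell} \geq \alpha$. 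Applying the anti-ultrametric inequality with $k = v_{\ell-1}$ yields $h_{v_0 v_\ell} \geq \min\{h_{v_0 v_{\ell-1}}, h_{v_{\ell-1} v_\ell}\} \geq \alpha$, so $\{v_0, v_\ell\} = \{i, j\} \in E_H^\alpha$, completing the induction.

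For the if direction, given distinct $i, j, k \in [n]$, set $\alpha := \min\{h_{ik}, h_{jk}\}$; then both $\{i,k\}$ and $\{j,k\}$ lie in $E_H^\alpha$, so $i$, $j$, and $k$ all belong to a single connected component of $G_H^\alpha$. Completeness of this component forces $\{i,j\} \in E_H^\alpha$, which is exactly $h_{ij} \geq \alpha = \min\{h_{ik}, h_{jk}\}$, the anti-ultrametric inequality at the triple $(i,j,k)$.

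Neither direction is technically hard; the only bookkeeping point is to be consistent about the treatment of undefined entries of $H$. Since this lemma is used within the proof of Theorem~\ref{thm:Mnatural conv completable}, where one works with a fully defined matrix (with values in $\overline{\mathbf{R}}$, possibly $+\infty$) after a completion, handling $\alpha$ uniformly across $\overline{\mathbf{R}}$ introduces no additional subtlety, and the induction on path length is the only real content of the argument.
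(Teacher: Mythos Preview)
Your proposal is correct. The overall strategy matches the paper's: both directions hinge on the observation that the anti-ultrametric inequality at a triple $(i,j,k)$ is exactly the statement that a length-$2$ path in $G_H^\alpha$ forces an edge between its endpoints. For the only-if part, the paper argues by contraposition (a non-complete component contains some $i,j,k$ with $\{i,j\},\{j,k\}\in E_H^\alpha$ but $\{i,k\}\notin E_H^\alpha$, violating the inequality), which is precisely the $\ell=2$ case of your induction; your inductive formulation is a touch more explicit but logically equivalent. For the if part, your argument is actually cleaner than the paper's: you set $\alpha=\min\{h_{ik},h_{jk}\}$ and read off $h_{ij}\ge\alpha$ directly from completeness, whereas the paper establishes the stronger reformulation ``if $h_{ij}>h_{jk}$ then $h_{jk}=h_{ik}$'' via a three-case analysis on the position of $h_{ik}$. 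Both routes are valid; yours avoids the case split at the cost of not exhibiting the equality of the two smaller values explicitly.
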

\begin{proof}
	(only-if part).
	We show the contraposition.
	Suppose that for some $\alpha \in \overline{\mathbf{R}}$ there exists a non-complete graph among the connected components of $G_H^{\alpha}$.
	Then there exist distinct $i,j,k \in [n]$ with $\{i,j\}, \{j,k\} \in E_H^{\alpha} \not\ni \{i,k\}$.
	By the definition of $E_H^{\alpha}$,
	it holds that $\min\{h_{ij}, h_{jk}\} \geq \alpha > h_{ik}$.
	This means that $\{h_{ij}, h_{jk}, h_{ik}\}$ does not satisfy the anti-ultrametric property.
	
	(if part).
	Suppose that each connected component of $G_H^{\alpha}$ is a complete graph for all $\alpha \in \overline{\mathbf{R}}$.
	To show the anti-ultrametric property of $(h_{ij})_{i,j \in [n]}$,
	it suffices to prove $h_{jk} = h_{ik}$ for all distinct $i,j,k$ satisfying $h_{ij} > h_{jk}$.
	If $h_{ik} \geq h_{ij}$, then there exists a non-complete graph among the connected components of $G_H^\alpha$ for $\alpha = h_{ij}$, which is a contradiction.
	If $h_{ij} > h_{ik} > h_{jk}$, then there exists a non-complete graph among the connected components of $G_H^\alpha$ for $\alpha = h_{ik}$, which is a contradiction.
	If $h_{jk} > h_{ik}$, then there exists a non-complete graph among the connected components of $G_H^\alpha$ for $\alpha = h_{jk}$, which is a contradiction.
	Therefore we must have $h_{jk} = h_{ik}$.
\end{proof}

We are now ready to prove Theorem~\ref{thm:Mnatural conv completable}.
\begin{proof}[Proof of Theorem~\ref{thm:Mnatural conv completable}]
	(only-if part).
	Suppose to the contrary that $H := (h_{ij})_{i,j \in [n]}$ is M${}^\natural$-convex completable
	and that there exists a chordless cycle $C$ of $G_H$ with $|\argmin_{e \in C} w(e)| = 1$.
	Let
	$C=  \{ \{i_1,i_2\}, \{i_2,i_3\}, \dots, \{i_m,i_1\} \}$,
	and consider the corresponding  entries
	$\{h_{i_1i_2}, h_{i_2i_3}, \dots, h_{i_mi_1}\}$  of $H$.
	Note that $h_{i_pi_q}$ is undefined for $p,q \in [m]$ with 
	$|p-q| \not= 1 \mod m$.
	We may assume $\alpha := h_{i_1i_2} = \min\{h_{i_1i_2}, h_{i_2i_3}, \dots, h_{i_mi_1}\}$.
	By the assumption of $|\argmin_{e \in C} w(e)| = 1$, we have
	$\min\{h_{i_2i_3}, \dots, h_{i_mi_1}\} > \alpha$.
	Since
	$\{h_{i_1i_2}, h_{i_2i_3}, h_{i_1i_3}\}$ 
	should satisfy
	the anti-ultrametric property, we have to assign $\alpha$ to $h_{i_1i_3}$ to obtain an M${}^\natural$-convex completion.
	Since $\{h_{i_1i_3}, h_{i_3i_4}, h_{i_1i_4}\}$ should satisfy the anti-ultrametric property,
	we have to assign $\alpha$ to $h_{i_1i_3}$ to obtain an M${}^\natural$-convex completion.
	By repeating this procedure, we arrive at $h_{i_1i_{m-1}} = \alpha$.
	This is a contradiction, since
	$h_{i_1i_{m-1}} < \min\{h_{i_{m-1}i_m}, h_{i_1i_m}\}$
	and hence the anti-ultrametric property fails
	for  $\{h_{i_1i_{m-1}}, h_{i_{m-1}i_m}, h_{i_mi_1}\}$.
	
	(if part).
	For $\alpha \in \overline{\mathbf{R}}_+$,
	define $S_H^{\alpha}$ by
	\begin{align*}
	S_H^\alpha &:= \{ \{ i,j \} \not\in E_H \mid i,j \in V_H^\alpha,\ \text{$i$ and $j$ are connected in $G_H^\alpha$} \}.
	\end{align*}
	Let $\tilde{G}_H^{\alpha} := (V_H^{\alpha}, E_H^{\alpha} \cup S_H^{\alpha})$.
	Recall that $E_H^{\alpha}$ and $V_H^{\alpha}$ are defined in (\ref{eq:E_H^alpha}) and (\ref{eq:V_H^alpha}).
	
	First we show that if each connected component of $\tilde{G}_H^\alpha$ is a complete graph for every $\alpha \in \overline{\mathbf{R}}_+$,
	then $H$ is M${}^\natural$-convex completable.
	Let $\alpha_1 > \alpha_2 > \cdots > \alpha_p$ be the distinct values of defined elements of $(h_{ij})_{i,j \in [n]}$ ($\alpha_1$ can be the infinite value).
	We assign $\alpha_1$ to each undefined element $h_{ij}$ such that $\{i,j\} \in S_H^{\alpha_1}$,
	$\alpha_k$ to each $h_{ij}$ such that $\{i,j\} \in S_H^{\alpha_k} \setminus S_H^{\alpha_{k-1}}$ for $k = 2, \dots, p-1$,
	and $\alpha_p$ to each $h_{ij}$ such that $\{i,j\} \not\in S_H^{\alpha_{p-1}}$.
	Then we obtain a certain completion $\tilde{H} := (\tilde{h}_{ij})_{i,j \in [n]}$ of $(h_{ij})_{i,j \in [n]}$.
	It is clear that each connected component of $G_{\tilde{H}}^\alpha$ is a complete graph for every $\alpha \in \overline{\mathbf{R}}$.
	By Lemma~\ref{lem:anti-ultrametric}, $\tilde{H}$ satisfies the anti-ultrametric property.
	This means that $H$ is M${}^\natural$-convex completable.
	
	Next we show that if $|\argmin_{e \in C} w(e) | \geq 2$ holds for every chordless cycle $C$ of $G_H$,
	then each connected component of $\tilde{G}_H^\alpha$ is a complete graph for every $\alpha \in \overline{\mathbf{R}}_+$.
	Take arbitrary $\alpha \in \overline{\mathbf{R}}_+$ and $i$ and $j$ which are connected in $\tilde{G}_H^\alpha$
	(Note that vertex sets of connected components of $\tilde{G}_H^\alpha$ are the same as those of $G_H^\alpha$).
	It suffices to prove that $\{i,j\} \in E_H^{\alpha}$ or $\{i,j\} \in S_H^{\alpha}$ holds.
	Suppose to the contrary that there exist $i$ and $j$ such that $\{i,j\} \not\in E_H^{\alpha}$ and $\{i,j\} \not\in S_H^{\alpha}$ hold.
	Let $I$ be the set of such $\{i,j\}$.
	Let $\{i_0, j_0\} \in I$ be a pair of vertices such that the number of edges of a shortest $i_0$-$j_0$ path on $G_H^\alpha$ is minimum in $I$.
	Since $i_0$ and $j_0$ are connected in $G_H^\alpha$ and $\{i_0, j_0\} \not\in S_H^\alpha$,
	we have $\{i_0, j_0\} \in E_H$.
	Moreover since $\{i_0, j_0\} \not\in E_H^\alpha$,
	$h_{i_0j_0} < \alpha$ holds.
	Take a $i_0$-$j_0$ shortest path $P_0$.
	Then $P_0 \cup \{i_0, j_0\}$ is a chordless cycle of $G_H$.
	Indeed, if $P_0 \cup \{i_0, j_0\}$ has a chord in $G_H$,
	there exist $i'$ and $j'$ satisfying $\{i',j'\} \neq \{i_0,j_0\}$ in $P_0 \cup \{i_0, j_0\}$ such that $E_H^\alpha \not\ni \{i', j'\} \in E_H$ by the minimality of $|I|$.
	Then $\{i', j'\} \in I$ and the number of edges of a shortest $i'$-$j'$ path is smaller that those of $P_0$.
	However this is a contradiction to the minimality of $\{i_0, j_0\}$.
	Hence $P_0 \cup \{i_0, j_0\}$ is a chordless cycle of $G_H$.
	It holds that $h_{ij} \geq \alpha$ for $\{i,j\} \in P_0$ and $h_{i_0j_0} < \alpha$.
	Therefore we obtain $|\argmin_{e \in P_0 \cup \{i_0,j_0\}} w(e)| = 1$.
	This contracts the assumption of $|\argmin_{e \in C} w(e) | \geq 2$.
	Hence we have $\{i,j\} \in E_H^{\alpha}$ or $\{i,j\} \in S_H^{\alpha}$.
	\end{proof}
	
	\paragraph{Proof of Theorem~\ref{thm:main}.}
	Let $H := (h_{(i,a), (j,b)})_{(i,a), (j,b) \in U}$,
	where the entries $h_{(i,a), (j,b)}$ with $i=j$ are undefined.
	Recall that $G_H = (U, E_H; w)$ is the assignment graph of $H$.
	By the definition of $E_H$ and $h_{(i,a), (j,b)}$ in (\ref{eq:h_(i,a)(j,b)}),
	we have $E_H = \{ \{(i,a), (j,b)\} \mid i \neq j,\ a \in D_i,\ b \in D_j \}$.
	By Theorem~\ref{thm:Mnatural conv completable}, $H$ is
	M${}^\natural$-convex completable if and only if 
	every chordless cycle $C$ satisfies the condition  $|\argmin_{e \in C} w(e)| \geq 2$. 
	
	First we show that  chordless cycles in $G_H $ have length 3 or 4.
	Take any chordless cycle $C = \{ \{(i_1, a_1), (i_2,a_2)\}, \{(i_2, a_2), (i_3,a_3)\}, \dots, \{(i_k, a_k), (i_1,a_1)\} \}$ of $G_H$.
	Since $C$ is chordless,
	we have $i_1 = i_p$ for $3 \leq p \leq k-1$ and $i_2 = i_q$ for $4 \leq q \leq k$.
	This implies $k \leq 4$, since otherwise we obtain $i_1 = i_4 = i_2$,
	contradicting the existence of  an edge between $(i_1,a_1)$ and $(i_2,a_2)$.
	
	For a (chordless) cycle of length 3, say,
	$C = \{ \{(i_1,a_1), (i_2,a_2)\}, \{(i_2,a_2), (i_3,a_3)\}, \{(i_3,a_3), (i_1,a_1)\} \}$
	with $i_1 \neq i_2 \neq i_3 \neq i_1$, 
	the condition $|\argmin_{e \in C} w(e)| \geq 2$ 
	is equivalent to  (\ref{eq:JWP}) for JWP.
	For a chordless cycle of length 4, say,
	$C = \{ \{(i_1,a_1), (i_2,a_2)\}, \{(i_2,a_2), (i_3,a_3)\}, \{(i_3,a_3), (i_4,a_4)\}, \{(i_4,a_4), (i_1,a_1)\} \}$
	we have $i_1 \neq i_2$, $i_3 \neq i_4$, $i_1 = i_3$, $i_2 = i_4$, $a_1 \neq a_3$, $a_2 \neq a_4$, and then 
	the condition $|\argmin_{e \in C} w(e)| \geq 2$ 
	is equivalent to (\ref{eq:Z-free}) for Z-freeness.

\paragraph{Proof of Theorem~\ref{thm:algo}.}
We investigate each step 
in turn.

(Step 1).
Since the number of defined elements of $(h_{(i,a),(j,b)})_{(i,a),(j,b) \in U}$ is $O(n^2 - \sum_{i=1}^r d_i^2) = O(n^2)$, we can find an M${}^\natural$-convex completion in $O(n^2 + n \log n)$ time (recall Remark~\ref{rem:completion}).

(Step 2).
If we have the value of $\overline{f}(\hat{x}^*)$,
we can compute the value of $\overline{f}(\hat{x}^* + \chi_{(i,a)})$ in $O(r)$ time
since $\overline{f}(\hat{x}^* + \chi_{(i,a)}) = \overline{f}(\hat{x}^*) + c_i(a) + \sum_{(j,b) \in \textrm{supp}^+(\hat{x}^*)} \tilde{h}_{(i,a), (j,b)}$.
Hence the time complexity of Step~3 is $O(nr^2)$ time.

(Step 3).
Recall the definition of $G_{\hat{x}^*, \hat{y}^*}$ in (\ref{eq:V})--(\ref{eq:w}).
We have $|E_{\hat{x}^*}| = O(r(n-r)) = O(nr)$,
$|E_{\hat{y}^*}| = O(n)$,
$|E^+| = O(r)$,
and $|E^-| = O(r)$.
Hence $|E_{\hat{x}^*, \hat{y}^*}| = O(nr)$.
Furthermore we 
need to compute $\ell$
only on $E_{\hat{x}^*}$,
since $\ell$ is equal to zero on other arcs.
If we have the value of $\overline{f}(\hat{x}^*)$
at hand,
we can compute the value of $\overline{f}(\hat{x}^* + \chi_{(j,b)} - \chi_{(i,a)})$ in $O(r)$ time
since
\begin{align*}
&\overline{f}(\hat{x}^* + \chi_{(j,b)} - \chi_{(i,a)})\\
&= \overline{f}(\hat{x}^*) - \left(c_i(a) + \sum_{(k,c) \in \textrm{supp}^+(\hat{x}^*)} \tilde{h}_{(i,a), (k,c)}\right) + \left(c_j(b) + \sum_{(k,c) \in \textrm{supp}^+(\hat{x}^* - \chi_{(i,a)})} \tilde{h}_{(j,b), (k,c)}\right).
\end{align*}
Therefore we can
construct
the auxiliary graph $G_{\hat{x}^*, \hat{y}^*}$ 
in $O(nr^2)$ time.

The modified arc length 
$\ell_{p}$
is nonnegative~\cite[Section~5.2]{book/Murota00}.
Hence we can compute $\Delta p(v)$ for $v \in V$ 
and a shortest path $P$ in 
Step~3-2 
in $O(nr + n \log n)$ time by using Dijkstra's algorithm with Fibonacci heaps~\cite{JACM/FT87} (see also~\cite[Section~7.4]{book/Schrijver03}).
We can update $\hat{x}^*$, $\hat{y}^*$, and $p$ in 
Step~3-3 
in $O(nr)$ time.
By one iteration of 
	Step~3,
	the value of $\| \hat{x}^* - \hat{y}^* \|_1$ 
	is decreased by two.
Hence the number of iterations of 
Step~3
is bounded by $O(r)$.
Therefore the time complexity of 
Step~3 
is $O(nr^3 + nr \log n)$.

By the above argument,
we see
that 
the proposed algorithm 
runs
in $O(nr^3 + nr \log n + n^2)$ time.

\section*{Acknowledgments}
We thank Kazutoshi Ando and Takanori Maehara for information on the paper~\cite{Algo/FKW95} in Remark~\ref{rem:completion}.
We also thank the referees for helpful comments.
This research was initiated at the Trimester Program ``Combinatorial Optimization''
at Hausdorff Institute of Mathematics, 2015.
The first author's research was supported by JSPS Research Fellowship for Young Scientists.
The second author's research was supported by The Mitsubishi Foundation, CREST, JST,
and JSPS KAKENHI Grant Number 26280004.
The last author's research was supported by a Royal Society University 
Research Fellowship. This project has received funding from the European 
Research Council (ERC) under the European Union's Horizon 2020 research 
and innovation programme (grant agreement No 714532). The paper reflects 
only the authors' views and not the views of the ERC or the European 
Commission. The European Union is not liable for any use that may be 
made of the information contained therein.


\end{document}